\newtheorem{observation}{Observation}
\newtheorem{example}{Example}
\newtheorem{theorem}{Theorem}
\newtheorem{lemma}{Lemma}
\newtheorem{algorithm}{Algorithm}
\newcommand{\sources}{S^+}
\newcommand{\sinks}{S^-}
\title{A Note on the Quickest Minimum Cost Transshipment Problem\tnoteref{t1}}
\tikzstyle{node}=[circle, inner sep = 0pt, minimum size = 0.9em, fill, fill = black, text = white]
\tikzstyle{arc}=[ultra thick, -stealth]
\author[1]{Martin Skutella}
\ead{martin.skutella@tu-berlin.de}
\address[1]{TU Berlin, Institut für Mathematik, Str.~des 17.~Juni 136, 10623 Berlin, Germany, martin.skutella@tu-berlin.de}
\begin{document}

\begin{abstract}
Klinz and Woeginger (1995) prove that the minimum cost quickest flow problem is NP-hard. On the
other hand, the quickest minimum cost flow problem can be solved efficiently via a
straightforward reduction to the quickest flow problem without costs. More generally, we show how
the quickest minimum cost transshipment problem can be reduced to the efficiently solvable
quickest transshipment problem, thus adding another mosaic tile to the rich complexity landscape
of flows over time.
\end{abstract}

\begin{keyword}
flow over time \sep transshipment \sep transportation problem \sep complexity 
\end{keyword}

\maketitle
\section{Introduction}

Network flows over time generalize the classical concept of static network flows by incorporating
the temporal development of flow progressing through a network with transit times on the arcs.
They have numerous applications in various areas such as traffic and transport, production,
finance, evacuation, and communication; see, e.g., the classical surveys by \citet{Aronson89} and
by \citet{PowellJO95}.

Historically, flows over time have been introduced by \citet{FordFulkerson58} (see also their
classical textbook~\cite{FordFulkerson62}). For single-source single-sink networks with
capacities and transit times on the arcs, they show how to efficiently compute a maximum flow
over time with given time horizon. If, instead of fixing the time horizon, we fix the flow value
and ask for a flow over time with minimum time horizon, we arrive at the \emph{quickest flow
problem}. For this problem, \citet{Saho2017} present the currently fastest known algorithm with
strongly polynomial running time~$O(m^2n\log^2 n)$, where~$m$ denotes the number of arcs and~$n$
the number of nodes.

Somewhat surprisingly, the \emph{quickest transshipment problem} in networks containing several
sources with given supplies and several sinks with given demands seems to require much more
advanced algorithmic techniques than the single-source single-sink case. The first algorithm due
to \citet{HoppeTardos00}, as well as all efficient algorithms developed subsequently, rely on
parametric submodular function minimization; see~\citet{SchloterST22} for the latest and
currently fastest such algorithm.

Maybe even more surprisingly, \citet{KlinzW95,KlinzW04} reveal a significant complexity gap
between static flows and flows over time. They show that it is NP-hard to compute minimum cost
$s$-$t$-flows over time in networks with capacities, transit times, and cost coefficients on the
arcs. In particular, they prove that the \emph{minimum cost quickest flow problem}, that is,
finding among all flows over time with minimum time horizon one that has minimum cost, is NP-hard.

Reversing the order of the two objectives time and cost yields the \emph{quickest minimum cost flow
problem} 
which, to the best of our knowledge, is considered for the first time in this paper, and constitutes
a special case of the problem in the main focus of this paper.
Here, the primary objective is to minimize the total cost of the flow over time, and the
secondary objective is to minimize its time horizon. 
It is easy to observe that,
in this case, flow may be sent along cheapest
source-sink-paths only. The quickest minimum cost flow problem thus polynomially reduces to the
efficiently solvable quickest flow problem on the subnetwork formed by all arcs that are contained
in some cheapest path. This rather straightforward observation is discussed in some more detail in
Section~\ref{sec:quickest-min-cost-flow}. 

In Section~\ref{sec:quickest-min-cost-transshipment} we present a generalization of this polynomial
reduction to the \emph{quickest minimum cost transshipment problem}. We prove that also a quickest
minimum cost transshipment can be obtained by computing a quickest transshipment in a suitably
chosen subnetwork. Determining this subnetwork, however, turns out to be nontrivial as it requires
an optimal dual solution to a static transportation problem on a bipartite graph.
%
\subsection*{Preliminaries and notation}
We consider a directed graph~$D=(V,A)$ with node set~$V$ and arc set~$A$. We use~$n\coloneqq|V|$ to
denote the number of nodes, and~$m\coloneqq|A|$ to denote the number of arcs. Every arc~$a\in A$
has a positive \emph{capacity~$u_a>0$}, a non-negative \emph{transit time~$\tau_a\geq0$}, and a
\emph{cost coefficient~$c_a$}. For a path~$P$ in~$D$, we denote its total transit time
by~$\tau(P)\coloneqq\sum_{a\in P}\tau_a$, and its total cost by~$c(P)\coloneqq\sum_{a\in P}c_a$. We
assume throughout that arc costs are \emph{conservative}, that is, there is no negative cost cycle in~$D$;
otherwise, there might not exist a minimum cost flow over time since cost can be decreased
arbitrarily by repeatedly sending flow through a negative cost cycle.

Among the nodes in~$D$, there is a subset of \emph{sources~$\sources\subset V$} and a subset of
\emph{sinks~$\sinks\subseteq V\setminus\sources$} with given \emph{supplies~$b_s>0$},
for~$s\in\sources$, and \emph{demands~$-b_t>0$}, for~$t\in\sinks$. As usual, we require that the
total supply equals the total demand, that is,~$\sum_{s\in\sources}b_s+\sum_{t\in \sinks}b_t=0$.

For the purpose of this note, only a rudimentary understanding of flows over time is necessary. A
thorough introduction to the area can be found in the survey~\cite{Skutella-Korte09}. A flow over
time~$f$ with time horizon~$\theta$ specifies for each arc~$a\in A$ a Lebesgue-measurable
function~$f_a:[0,\theta-\tau_a)\to\mathbb{R}_{\geq0}$, where~$f_a(\theta')$ is the flow rate
entering arc~$a$ at its tail at time~$\theta'$. Thus, the total amount of flow entering and
traversing arc~$a$ is~$\int_0^{\theta-\tau_a}f_a(\theta')\,\text{d}\theta'$, and the cost incurred
on arc~$a$ is~$c_a$ times this amount.

The capacity~$u_a$ of arc~$a$ provides an upper bound on its inflow rate at all points in
time:~$f_a(\theta')\leq u_a$, for all~$\theta'\in[0,\theta-\tau_a)$. The transit time~$\tau_a$ is
the time it takes to traverse arc~$a$. In particular, the outflow rate at the head of arc~$a$ at
time~$\theta'\in[\tau_a,\theta)$ is~$f_a(\theta'-\tau_a)$. \emph{Flow conservation} states that, with
the exception of sources in~$\sources$, no flow deficit may occur at nodes at any point in time.
Moreover, with the exception of sinks in~$\sinks$, no flow surplus must remain at nodes at
time~$\theta$. Depending on the specifics of the considered flow model, it might or might not be
allowed to temporarily store flow at intermediate nodes. But \citet{FleischerS07} prove that the
possibility of temporally storing flow at intermediate nodes does not lead to cheaper
transshipments over time.

We conclude these introductory remarks on flows over time with a small, illustrative example:
Sending~$b>0$ units of flow from node~$s$ to node~$t$ along a given~$s$-$t$-path~$P$ with
bottleneck capacity~$u\coloneqq\min_{a\in P}u_a$ can, for example, be achieved within time
horizon~$b/u+\tau(P)$ by sending flow at constant rate~$u$ into~$P$ during the time
interval~$[0,b/u)$ such that the last flow particle reaches~$t$ before time~~$b/u+\tau(P)$.

\section{Quickest minimum cost $s$-$t$-flows}
\label{sec:quickest-min-cost-flow}
As a warm-up and easy exercise, we consider the case of a single source node~$s$ and a single sink
node~$t$, with supply/demand~$b_s=-b_t$. We want to find a quickest minimum cost $s$-$t$-flow
in~$D$, that is, we are looking for an $s$-$t$-flow over time that sends~$b_s$ units of flow
from~$s$ to~$t$ in a cheapest possible way and, among all such minimum cost solutions, has minimum
time horizon~$\theta$.

If we do \emph{not} restrict the time horizon~$\theta$, a cheapest way of sending~$b_s=-b_t$
units of flow from source~$s$ to sink~$t$ is to send the entire flow along a cheapest
$s$-$t$-path~$P$, resulting in overall cost~$b_s\cdot c(P)$. As long as time is not an issue,
capacity values and transit times of arcs on path~$P$ do not play a role. They only come into
play when we ask for a minimum cost~$s$-$t$-flow over time with bounded time horizon, or even a
quickest minimum cost~$s$-$t$-flow. In that case, it might be beneficial to use more than only one
cheapest path.

Notice that an $s$-$t$-flow over time has minimum cost if and only if all flow (but a null set)
is sent along cheapest $s$-$t$-paths. That is, all flow is sent through the cheapest-paths
network~$D'=(V,A')$, which consists of node set~$V$ and arc subset
\[
A'\coloneqq\{a\in A\mid\text{$a$ lies on some cheapest~$s$-$t$-path in~$D$}\}. 
\]
This immediately yields the following observation.
\begin{observation}
A quickest minimum cost $s$-$t$-flow can be obtained by finding a quickest $s$-$t$-flow in the
cheapest-paths network~$D'=(V,A')$.
\end{observation}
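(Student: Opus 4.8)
The plan is to identify the feasible minimum cost $s$-$t$-flows over time in $D$ with the feasible $s$-$t$-flows over time in the cheapest-paths network $D'$, and then to observe that optimizing the secondary objective (the time horizon) over this common solution set is nothing but the quickest flow problem in $D'$.

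For one inclusion, take a minimum cost $s$-$t$-flow over time $f$ in $D$ with smallest possible time horizon $\theta^*$. By the characterization recorded just above, all but a null set of flow travels along cheapest $s$-$t$-paths; since an arc $a\notin A'$ lies on no cheapest $s$-$t$-path, it carries zero flow in total, so $f_a\equiv 0$ almost everywhere. Hence $f$ is, up to modification on a null set, a feasible $s$-$t$-flow over time in $D'$ with the same time horizon, and a quickest $s$-$t$-flow in $D'$ has time horizon at most $\theta^*$. For the reverse, take a quickest $s$-$t$-flow over time $g$ in $D'$, decompose it into $s$-$t$-path chains and cycle chains, and discard all cycle chains; this leaves the routed supply and feasibility intact and does not increase the time horizon. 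At this point I would use the elementary fact that every $s$-$t$-path using only arcs of $A'$ is already a cheapest $s$-$t$-path in $D$: writing $d(v)$ for the cost distance from $v$ to $t$ with respect to the conservative costs $c$, membership $a=(v,w)\in A'$ forces $d(v)=c_a+d(w)$, and telescoping this identity along such a path shows that its cost equals $d(s)$, the minimum. Therefore the pruned flow sends all flow along cheapest paths and is a minimum cost $s$-$t$-flow over time in $D$ whose time horizon is at most that of $g$, so $\theta^*$ is at most the quickest flow value in $D'$. Combining the two bounds shows that these values agree and that the pruned flow is a quickest minimum cost $s$-$t$-flow.

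I expect the only delicate points to be the flow-over-time bookkeeping, not any of the underlying ideas: that an $s$-$t$-flow over time decomposes into path and cycle chains, that removing the cyclic chains (and any chains carrying zero flow) again yields a genuine flow over time with no larger time horizon, and that ``minimum cost up to a null set'' can be upgraded to an honest minimum cost flow over time without changing the time horizon. Since temporarily storing flow at intermediate nodes is harmless by the result of Fleischer and Skutella cited above, all of this is standard, so in the write-up I would dispatch it in a sentence or two and keep the emphasis on the two inequalities. The transshipment version treated later faces exactly this kind of argument, but there even the choice of the right subnetwork is no longer obvious.
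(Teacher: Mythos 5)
Your argument is correct and follows the same route the paper takes, which simply derives the observation ``immediately'' from the remark that a flow over time has minimum cost if and only if all flow (up to a null set) travels along cheapest $s$-$t$-paths. The extra details you supply --- the chain decomposition, discarding cycles, and the telescoping potential argument showing every $s$-$t$-path in $D'$ is itself a cheapest path --- are exactly the standard facts the paper leaves implicit here (the last one reappearing explicitly as the proof idea of Lemma~\ref{lem:active}).
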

The currently best-known running time for computing a quickest $s$-$t$-flow is $O(m^2n\log^2n)$
due to~\citet{Saho2017}. It dominates the time required to compute the cheapest-paths
network~$D'=(V,A')$
defined above, 
which can easily be obtained 
as follows: For all~$v\in V$, compute the cost~$\alpha_v$ of a cheapest~$s$-$v$-path as well as the cost~$\beta_v$ of a cheapest~$v$-$t$-path in~$D$. Then, the cheapest-paths
network's set of arcs is~$A'=\{(v,w)\in A\mid \alpha_v+c_{(v,w)}+\beta_w=\alpha_t\}$.
%

\section{Quickest minimum cost transshipments}
\label{sec:quickest-min-cost-transshipment}
When we turn to the quickest minimum cost transshipment problem in networks with several sources
and sinks, it is still true that an optimum solution must only use arcs that lie on a cheapest
$s$-$t$-path for some source~$s\in\sources$ and sink~$t\in\sinks$.

In contrast to the special case of $s$-$t$-flows, however, this necessary condition is no longer
sufficient. Instead, as we argue below, a quickest minimum cost transshipment can be
found by computing a quickest transshipment in a more carefully chosen subnetwork. Before going
into details, we provide an illustrating example.
\begin{example}
\label{exampl}
We consider the network~$D=(V,A)$ depicted in Figure~\ref{fig:counterexample}, with two source
nodes~$s_1,s_2$ of unit supply~$b_{s_1}=b_{s_2}=1$ and two sink nodes~$t_1,t_2$ of unit
demand~$-b_{t_1}=-b_{t_2}=1$.
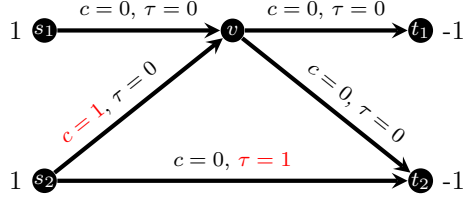
\begin{figure}[t]
\centering
\begin{tikzpicture}
\node [node,label=left:1] (s1) at (-2.5,0) {\footnotesize$s_1$};
\node [node,label=left:1] (s2) at (-2.5,-2) {\footnotesize$s_2$};
\node [node] (v) at (0,0) {\footnotesize$v$};
\node [node,label=right:-1] (t1) at (2.5,0) {\footnotesize$t_1$};
\node [node,label=right:-1] (t2) at (2.5,-2) {\footnotesize$t_2$};
\draw [arc] (s1) to node [above] {\footnotesize$c=0$, $\tau=0$} (v);
\draw [arc] (s2) to node [above, sloped, pos=0.4] {\footnotesize$\color{red}c=1$, $\tau=0$} (v);
\draw [arc] (s2) to node [above] {\footnotesize$c=0$, $\color{red}\tau=1$} (t2);
\draw [arc] (v) to node [above] {\footnotesize$c=0$, $\tau=0$} (t1);
\draw [arc] (v) to node [above, sloped, pos=0.6] {\footnotesize$c=0$, $\tau=0$} (t2);
\end{tikzpicture}
\caption{Quickest minimum cost transshipment instance with unit arc capacities, two sources with
unit supplies (left), and two sinks with unit demands (right)}
\label{fig:counterexample}
\end{figure}
Notice that the network is constructed such that every arc~$a\in A$ lies on a
cheapest~$s_i$-$t_j$-path for some pair~$i,j\in\{1,2\}$.

A quickest transshipment in~$D$ sends flow at rate~$1$ between times~$0$ and~$1$ through each zero
transit time arc. It has time horizon~$\theta=1$ and total cost~$1$, caused by sending one unit of
flow through the only `expensive' arc~$s_2v$ with cost coefficient~$c_{s_2v}=1$. A quickest
\emph{minimum cost} transshipment, however, may not use the costly arc~$s_2v$. Instead, it needs to
send one unit of flow from~$s_2$ to~$t_2$ via the direct but more time-consuming arc~$s_2t_2$ with
transit time~$\tau_{s_2t_2}=1$, and another unit from~$s_1$ to~$t_1$ via node~$v$. It thus requires
time horizon~$\theta=2$, total cost~$0$, and is a quickest transshipment in a
subnetwork~$D'=(V,A')$ where~$A'$ is obtained from~$A$ by deleting arc~$s_2v$ or arc~$vt_2$ (or
both).
\end{example}
In general, the correct choice of subnetwork~$D'=(V,A')$ such that any quickest transshipment
in~$D'$ is a quickest minimum cost transshipment in the original network~$D$ not only depends on
the arc costs, but also on the given supplies and demands at the sources and sinks, respectively.
This can be illustrated again using the network depicted in Figure~\ref{fig:counterexample}.

If we increase the supply of source~$s_2$ to~$b_{s_2}=\nicefrac32$ and the demand of sink~$t_1$
to~$-b_{t_1}=\nicefrac32$, any 
%
%
transshipment over time must send half a unit of flow
through the costly arc~$s_2v$. In this case, a quickest minimum cost transshipment still has
time horizon~$\theta=2$, total cost~$\nicefrac12$, and is a quickest transshipment in the
subnetwork obtained by only deleting arc~$vt_2$, that is,~\mbox{$A'=A\setminus\{vt_2\}$}.

On the other hand, if instead we increase the supply of source~$s_1$ to~$b_{s_1}=\nicefrac32$ and
the demand of sink~$t_2$ to~$-b_{t_2}=\nicefrac32$ in Figure~\ref{fig:counterexample}, a minimum
cost transshipment over time may no longer use the costly arc~$s_2v$, but must send half a unit of
flow through arc~$vt_2$. Thus, a quickest minimum cost transshipment has time horizon~$\theta=2$
again, total cost~$0$, and is a quickest transshipment in the subnetwork obtained by deleting
arc~$s_2v$ only, i.e.,~$A'=A\setminus\{s_2v\}$.

For an arbitrary instance of the quickest minimum cost transshipment problem, we would like to
determine which arcs of the given network~$D=(V,A)$ may be used by a minimum cost transshipment
over time. As a first step, we ask whether a particular source node~$s\in\sources$ may serve a
particular sink node~$t\in\sinks$ in a minimum cost transshipment over time.

To this end, we construct a static transportation problem on a bipartite
network~$\bar{D}=(\sources\cup\sinks,\bar{A})$ with~$\bar{A}\coloneqq\bigl\{st\in\sources\times
\sinks\mid\text{$\exists$ $s$-$t$-path in~$D$}\bigr\}$. An arc~$st\in\bar{A}$ thus represents the
option to send flow from source~$s$ to sink~$t$ in a transshipment over time. Since a
transshipment over time in~$D$ with arbitrarily large time horizon may send an arbitrary amount
of flow along an~$s$-$t$-path, all arcs in~$\bar{A}$ get infinite capacity. Moreover, the
cost~$\bar{c}_{st}$ of arc $st\in\bar{A}$ is set to the cost of a cheapest $s$-$t$-path in~$D$,
since, as already discussed above, a minimum cost transshipment over time may send flow along
cheapest paths only. Finally, supplies and demands at the nodes of the bipartite
network~$\bar{D}$ match those given in the original network. In Figure~\ref{fig:Dbar} we
illustrate the transportation problem corresponding to Example~\ref{exampl}, where the supply of source~$s_2$ and the demand of sink~$t_1$ have been increased to~$b_{s_2}=-b_{t_1}=\nicefrac32$.
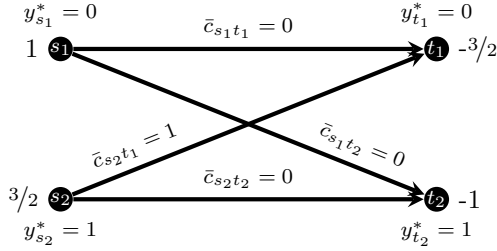
\begin{figure}[t]
\centering
\begin{tikzpicture}
\node [node,label=left:1,label=above:{\footnotesize$y^*_{s_1}=0$}] (s1) at (-2.5,0) {\footnotesize$s_1$};
\node [node,label=left:\nicefrac32,label=below:{\footnotesize$y^*_{s_2}=1$}] (s2) at (-2.5,-2) {\footnotesize$s_2$};
\node [node,label=right:-\nicefrac32,label=above:{\footnotesize$y^*_{t_1}=0$}] (t1) at (2.5,0) {\footnotesize$t_1$};
\node [node,label=right:-1,label=below:{\footnotesize$y^*_{t_2}=1$}] (t2) at (2.5,-2) {\footnotesize$t_2$};
\draw [arc] (s1) to node [above] {\footnotesize$\bar{c}_{s_1t_1}=0$} (t1);
\draw [arc] (s1) to node [above, sloped, pos=0.8] {\footnotesize$\bar{c}_{s_1t_2}=0$} (t2);
\draw [arc] (s2) to node [above, sloped, pos=0.2] {\footnotesize$\bar{c}_{s_2t_1}=1$} (t1);
\draw [arc] (s2) to node [above] {\footnotesize$\bar{c}_{s_2t_2}=0$} (t2);
\end{tikzpicture}
\caption{Transportation problem corresponding to the quickest minimum cost transshipment instance
depicted in Fig.~\ref{fig:counterexample} (with increased
supply/demand~$b_{s_2}=-b_{t_1}=\nicefrac32$), together with an optimum dual solution~$y^*$}
\label{fig:Dbar}
\end{figure}
\begin{observation}
\label{obs:transportation}
The static transportation problem on~$\bar{D}$ has a feasible solution if and only if there
exists a transshipment over time in~$D$. A cheapest solution to the static transportation problem
on~$\bar{D}$ has the same cost as a minimum cost transshipment over time in~$D$. Moreover, there
is a minimum cost transshipment over time in~$D$ that sends a positive amount of flow from
source~$s\in\sources$ to sink~$t\in\sinks$ if and only if~$st\in\bar{A}$ and there is an optimal
solution to the static transportation problem with positive flow value on arc~$st\in\bar{A}$.
\end{observation}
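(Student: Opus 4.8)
The plan is to pass from flows over time in~$D$ to static flows in~$D$ by integrating over the time horizon, and then to analyse the resulting static transshipment problem on~$\bar D$ by standard (static) flow decomposition.

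First I would establish a two-way correspondence between transshipments over time in~$D$ and static flows in~$D$ meeting the given supplies and demands. \emph{(i)} Given a transshipment over time~$f$ with time horizon~$\theta$, set $F_a\coloneqq\int_0^{\theta-\tau_a}f_a(\xi)\,\mathrm{d}\xi\geq0$ for each~$a\in A$. Integrating the flow conservation constraints of~$f$ over~$[0,\theta]$ and using that~$f$ leaves neither deficit nor surplus at any node in~$V\setminus(\sources\cup\sinks)$ by time~$\theta$ (which holds whether or not intermediate storage is permitted) shows that~$F$ is a static flow meeting the supplies~$b_s$ and demands~$-b_t$, and its cost $\sum_{a\in A}c_aF_a$ equals the cost of~$f$. \emph{(ii)} Conversely, given such a static flow~$F$, decompose it into finitely many source--sink path flows and cycle flows; sending each path flow of value~$y$ along its path~$P$ at rate~$\min_{a\in P}u_a$ during a separate time window yields a transshipment over time that respects all capacities and has cost $\sum_P y_P\,c(P)$, which is at most the cost of~$F$ since the discarded cycles have nonnegative cost (arc costs are conservative). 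If~$F$ is presented as a nonnegative combination of cheapest source--sink path flows, no cycles occur and the resulting cost equals that of~$F$.

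Next I would show that the minimum cost of a static flow in~$D$ meeting the supplies and demands equals the optimum value~$\mathrm{opt}(\bar D)$ of the transportation problem on~$\bar D$. For ``$\geq$'', decompose such a static flow~$F$ into path and cycle flows; the path flows induce a feasible transportation solution~$x$ with $x_{st}$ the total amount routed from~$s$ to~$t$, and since every $s$-$t$-path costs at least~$\bar c_{st}$ and cycles cost at least~$0$, we get $\sum_{a\in A}c_aF_a\geq\sum_{st\in\bar A}\bar c_{st}\,x_{st}\geq\mathrm{opt}(\bar D)$. For ``$\leq$'', any feasible transportation solution~$x$ is realized by the static flow routing~$x_{st}$ along a cheapest $s$-$t$-path for each~$st\in\bar A$, of cost exactly~$\sum_{st\in\bar A}\bar c_{st}\,x_{st}$. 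Combining with~(i) and~(ii): the transportation problem on~$\bar D$ is feasible iff such a static flow exists iff a transshipment over time in~$D$ exists, and the minimum cost of a transshipment over time in~$D$ equals $\mathrm{opt}(\bar D)$, which gives the first two assertions.

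Finally, for the last assertion: if $x^*$ is an optimal transportation solution with $x^*_{st}>0$ (so $st\in\bar A$), then realizing~$x^*$ by routing~$x^*_{s't'}$ along a cheapest $s'$-$t'$-path over separate time windows produces, by~(ii), a transshipment over time of cost $\mathrm{opt}(\bar D)$, hence a minimum cost one, that sends the positive amount~$x^*_{st}$ from~$s$ to~$t$. Conversely, suppose some minimum cost transshipment over time~$f$ sends a positive amount of flow from~$s$ to~$t$; unwinding the definition, the associated static flow~$F$ from~(i) admits a path/cycle decomposition routing positive flow along an $s$-$t$-path, so $st\in\bar A$ and the induced feasible transportation solution~$x$ has $x_{st}>0$; since $\mathrm{opt}(\bar D)=\sum_{a\in A}c_aF_a\geq\sum_{s't'\in\bar A}\bar c_{s't'}\,x_{s't'}\geq\mathrm{opt}(\bar D)$, equality holds throughout, so~$x$ is optimal with $x_{st}>0$. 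I expect the main obstacle to be the bookkeeping in step~(i) --- verifying that the time-integrated flow is genuinely a static flow, including the intermediate-storage subtlety --- together with pinning down the precise meaning of ``sends a positive amount of flow from~$s$ to~$t$'' (a positive-measure set of flow particles travelling from~$s$ to~$t$, equivalently a path decomposition of~$F$ with positive flow on an $s$-$t$-path) so that the decomposition step in the converse direction is legitimate; the remaining arguments are routine once the reduction to static flows is in place.
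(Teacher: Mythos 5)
Your proposal is correct and follows essentially the same route as the paper's proof: the paper also establishes the two-way correspondence by assigning to each arc $st\in\bar{A}$ the amount of flow sent from $s$ to $t$ in a given transshipment over time (bounding its cost below by $\bar{c}_{st}$ times that amount), and conversely by routing each transportation value $x_{st}$ along a cheapest $s$-$t$-path, using the unbounded time horizon to avoid capacity violations. Your time-integration and path/cycle decomposition steps simply make rigorous what the paper leaves implicit, and your use of conservativeness to discard cycles is exactly the right justification for the cost inequality.
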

\begin{proof}
Given a transshipment over time in~$D$, a solution to the transportation problem on~$\bar{D}$ can
be obtained as follows: assign to each arc~$st\in\bar{A}$ the amount of flow that is being sent
from~$s$ to~$t$ in the given transshipment over time in~$D$. By definition of~$\bar{c}_{st}$, the
cost for sending this flow through arc~$st$ is a lower bound on the respective cost incurred by
the transshipment over time in~$D$.

Vice versa, given a solution to the transportation problem on~$\bar{D}$, one can construct an
equally expensive transshipment over time in~$D$ as follows: for each arc~$st\in\bar{A}$, send
the amount of flow carried by~$st$ along a cheapest~$s$-$t$-path in~$D$. Since the time
horizon~$\theta$ may be chosen arbitrarily, the entire transport of flow through~$D$ can be
scheduled such that no arc capacities are violated.
\end{proof}
In order to exploit this observation, we consider the natural linear programming formulation of the static transportation problem on~$\bar{D}$. For every~$st\in\bar{A}$, the variable~$x_{st}$ gives the amount of flow sent through arc~$st$ in~$\bar{D}$: 
\begin{align*}
\min~
&\sum_{st\in\bar{A}}\bar{c}_{st}\cdot x_{st}\\
\text{s.t.}\quad
&\sum_{t:st\in\bar{A}}x_{st}~=~b_s && \text{for all~$s\in\sources$,}\\
&\sum_{s:st\in\bar{A}}-x_{st}~=~b_t && \text{for all~$t\in\sinks$,}\\
&x_{st}~\geq~0 && \text{for all~$st\in\bar{A}$.}
\end{align*}
The corresponding dual linear program is:
\begin{align}
\max~
&\sum_{s\in\sources}b_s\cdot y_s+\sum_{t\in\sinks}b_t\cdot y_t\notag\\
\text{s.t.}\quad
& y_s-y_t~\leq~\bar{c}_{st} && \text{for all~$st\in\bar{A}$.}\label{eq:dual}
\end{align}
In view of Observation~\ref{obs:transportation}, we assume that there exists a transshipment over
time in~$D$ and thus a feasible solution to our static transportation problem. Since we assume
that cost coefficients in~$D$ are conservative, Observation~\ref{obs:transportation} also implies
the existence of minimum cost solutions to both problems. In particular, the dual linear program
above has an optimum solution which we denote by~$y^*$. For~$s\in\sources$ and~$t\in\sinks$, we
say that the source-sink pair~$s,t$ is \emph{active}, if~$st\in\bar{A}$ and the corresponding
dual constraint~\eqref{eq:dual} is tight, i.e., $y^*_s-y^*_t=\bar{c}_{st}$. Moreover, an
$s$-$t$-path~$P$ in~$D$ is called \emph{admissible} if $s,t$ is an active pair and~$P$ is a
cheapest $s$-$t$-path, i.e.,~$c(P)=\bar{c}_{st}$.
\begin{observation}
\label{obs:active}
A transshipment over time in~$D$ has minimum cost if and only if flow is being sent from~$s\in\sources$ to~$t\in\sinks$ along admissible $s$-$t$-paths only.
\end{observation}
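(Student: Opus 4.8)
The plan is to prove both implications by relating the total cost $c(f)$ of an arbitrary transshipment over time $f$ in $D$ to the optimal value $\gamma$ of the transportation LP on $\bar D$, using its optimal dual solution $y^*$ together with LP duality. The bridge is exactly the construction from the proof of Observation~\ref{obs:transportation}: to a transshipment over time $f$ associate the vector $x$ with $x_{st}$ the amount of flow routed from $s$ to $t$ in $f$ (so $x_{st}=0$ whenever $st\notin\bar A$). Then $x$ is feasible for the transportation LP, and $c(f)\ge\sum_{st\in\bar A}\bar c_{st}\,x_{st}$, with equality precisely when — up to a null set of flow — all flow travelling from $s$ to $t$ is routed along cheapest $s$-$t$-paths in~$D$.

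For the ``if'' direction, assume $f$ sends flow from each source $s$ to each sink $t$ only along admissible $s$-$t$-paths. Admissibility first means these paths are cheapest, hence $c(f)=\sum_{st\in\bar A}\bar c_{st}\,x_{st}$; and it means $x_{st}>0$ only for active pairs, for which $\bar c_{st}=y^*_s-y^*_t$. Substituting this and regrouping via the supply/demand equalities $\sum_{t}x_{st}=b_s$ and $\sum_{s}x_{st}=-b_t$ yields $c(f)=\sum_{s\in\sources}b_s y^*_s+\sum_{t\in\sinks}b_t y^*_t$, the dual objective value at $y^*$. By LP duality this equals $\gamma$, and by Observation~\ref{obs:transportation} $\gamma$ is the minimum cost of a transshipment over time in $D$; hence $f$ has minimum cost.

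For the ``only if'' direction, assume $f$ has minimum cost. Chaining the two displayed inequalities with Observation~\ref{obs:transportation} gives
\[
\gamma\;\le\;\sum_{st\in\bar A}\bar c_{st}\,x_{st}\;\le\;c(f)\;=\;\gamma,
\]
so both inequalities are tight. Tightness of the right-hand one says all flow is routed along cheapest paths; tightness of the left-hand one says $x$ is an optimal solution of the transportation LP, so complementary slackness with the optimal dual $y^*$ forces $y^*_s-y^*_t=\bar c_{st}$ whenever $x_{st}>0$, i.e.\ $s,t$ is active whenever $f$ routes a positive amount of flow from $s$ to $t$. Together this means all flow of $f$ travels along admissible paths.

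The one genuinely delicate point is the very first step: giving a clean meaning to ``the amount of flow sent from $s$ to $t$'' and to ``flow travelling along a path'' for a flow over time, and establishing $c(f)\ge\sum_{st\in\bar A}\bar c_{st}\,x_{st}$. I would handle this by passing to a (condensed) time-expanded network, in which $f$ corresponds to a static flow of the same cost; decomposing that static flow into paths and cycles; projecting each source-to-sink path to an $s$-$t$-walk in $D$ and each cycle to a closed walk in $D$; and invoking conservativeness of the arc costs to shortcut every such walk into an $s$-$t$-path of cost at least $\bar c_{st}$ and to bound the cost of every closed walk from below by $0$. This makes the $x_{st}$ well defined and yields the desired inequality (with equality exactly when no such shortcutting is needed and no cycles of nonzero cost occur), after which the LP-duality bookkeeping above is routine.
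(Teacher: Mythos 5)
Your proof is correct and is precisely the argument the paper intends: its own proof is a one-line appeal to Observation~\ref{obs:transportation} together with LP (complementary slackness) duality, which your write-up simply expands in full. Your extra care in defining the amounts $x_{st}$ via a path/cycle decomposition in the time-expanded network and in handling conservativeness of cycle costs goes beyond what the paper records, but it fills in details the paper leaves implicit rather than taking a different route.
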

\begin{proof}
The observation is a direct consequence of Observation~\ref{obs:transportation} and the
complementary slackness theorem of linear programming.
\end{proof}
Next we show that there is a sub-network~$D'=(V,A')$ of~$D$ that contains all
admissible~$s$-$t$-paths, and such that any~$s$-$t$-path in~$D'$ is admissible, for~$s\in\sources$
and~$t\in\sinks$. To this end, starting from~$D$, we first construct an extended
network~$\tilde{D}$ by adding a super-source~$\tilde{s}$ and a super-sink~$\tilde{t}$ to node
set~$V$. The super-source~$\tilde{s}$ is connect to all sources~$s\in\sources$ by
arcs~$\tilde{s}s$; and all sinks~$t\in\sinks$ are connected to the super-sink~$\tilde{t}$ by
arcs~$t\tilde{t}$.
%
%
The new arcs' cost coefficients are set to~$c_{\tilde{s}s}\coloneqq-y^*_s$, for~$s\in\sources$,
and~$c_{t\tilde{t}}\coloneqq y^*_t$, for~$t\in\sinks$. Notice that, by
construction,~$\tilde{D}$ does not contain negative cost cycles since every cycle in~$\tilde{D}$
is also contained in~$D$.
\begin{lemma}
\label{lem:active}
Consider the subnetwork~$D'=(V,A')$ of~$D$ with
\[
A'\coloneqq\bigl\{a\in A\mid\text{$a$ lies on cheapest~$\tilde{s}$-$\tilde{t}$-path in~$\tilde{D}$}\bigr\}.
\]
Then, for~$s\in\sources$ and~$t\in\sinks$, 
an $s$-$t$-path in~$D$ is admissible if and only if it is contained in subnetwork~$D'$,
%
that is, if and only if it uses edges in~$A'$ only.
\end{lemma}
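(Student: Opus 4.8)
The plan is to characterize cheapest $\tilde s$-$\tilde t$-paths in $\tilde D$ in terms of the shifted cost function, and then translate this into admissibility of the underlying $s$-$t$-path in $D$. First I would observe that every $\tilde s$-$\tilde t$-path in $\tilde D$ has the form $\tilde s, s, \dots, t, \tilde t$ for some $s\in\sources$, $t\in\sinks$, consisting of the arc $\tilde s s$, an $s$-$t$-path $P$ in $D$, and the arc $t\tilde t$. By the chosen cost coefficients $c_{\tilde s s}=-y^*_s$ and $c_{t\tilde t}=y^*_t$, the cost of this $\tilde s$-$\tilde t$-path equals $-y^*_s + c(P) + y^*_t$. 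So I would first compute the minimum possible value of this quantity over all choices of $s,t$ and all $s$-$t$-paths $P$: since $c(P)\geq \bar c_{st}$ always, and by dual feasibility $\bar c_{st}\geq y^*_s - y^*_t$ for every $st\in\bar A$, we get $-y^*_s + c(P) + y^*_t \geq 0$ for every $\tilde s$-$\tilde t$-path. Hence the cost of a cheapest $\tilde s$-$\tilde t$-path in $\tilde D$ is at least $0$; and it equals $0$ because, as noted in Observation~\ref{obs:transportation}, a feasible (indeed optimal) transportation solution exists, so there is at least one active pair $s,t$ with a cheapest $s$-$t$-path $P$ (i.e. $c(P)=\bar c_{st}=y^*_s-y^*_t$), whose associated $\tilde s$-$\tilde t$-path has cost exactly $0$.

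With the optimal value pinned down at $0$, the equivalence falls out: an $s$-$t$-path $P$ in $D$ is on a cheapest $\tilde s$-$\tilde t$-path in $\tilde D$ if and only if $-y^*_s + c(P) + y^*_t = 0$, i.e. $c(P) = y^*_s - y^*_t$. I would then argue this last equality is equivalent to admissibility of $P$: the inequality chain $c(P)\geq \bar c_{st}\geq y^*_s - y^*_t$ collapses to equality throughout exactly when both $c(P)=\bar c_{st}$ (so $P$ is a cheapest $s$-$t$-path) and $\bar c_{st}=y^*_s-y^*_t$ (so the pair $s,t$ is active) — which is precisely the definition of $P$ being admissible. This handles the direction ``$P$ admissible $\Rightarrow$ $P$ lies on a cheapest $\tilde s$-$\tilde t$-path''.

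It remains to tie this to the arc set $A'$, i.e. to the statement that $P$ is admissible iff every arc of $P$ lies in $A'$. One direction is immediate: if $P$ is admissible then $P$ (together with $\tilde s s$ and $t\tilde t$) is a cheapest $\tilde s$-$\tilde t$-path, so each of its arcs lies on a cheapest $\tilde s$-$\tilde t$-path and hence in $A'$. For the converse I would use the standard fact about shortest-path arc sets: since $\tilde D$ has no negative-cost cycles, letting $d(v)$ be the cheapest $\tilde s$-$v$-path cost, an arc $(v,w)$ lies on some cheapest $\tilde s$-$\tilde t$-path iff $d(v)+c_{(v,w)}+ (\text{cheapest }w\text{-}\tilde t\text{ cost}) = 0$; and if \emph{every} arc of an $s$-$t$-path $P$ (plus the two attaching arcs) satisfies this, one can concatenate the tight segments to conclude the whole $\tilde s$-$\tilde t$-path through $P$ is cheapest, hence has cost $0$, hence (by the previous paragraph) $P$ is admissible. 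Care is needed here: a priori an arc of $A'$ might lie on a cheapest $\tilde s$-$\tilde t$-path routed through a \emph{different} source or sink than the endpoints of $P$, so the concatenation argument must check that the path $\tilde s, s, P, t, \tilde t$ itself — not merely some path through each arc — is cheapest; this is the main obstacle, and I expect to resolve it exactly via the additive tightness condition $d(v)+c_{(v,w)}+d'(w)=0$ along $P$, which telescopes correctly regardless of which sources/sinks witness individual arcs. Summing the tight conditions along $P$ and the two attaching arcs gives total cost $0$ for the concrete $\tilde s$-$\tilde t$-path through $P$, completing the equivalence.
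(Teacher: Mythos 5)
Your proof is correct and follows essentially the same route as the paper: express the cost of the lifted path $\tilde P$ as $-y^*_s+c(P)+y^*_t$, use dual feasibility to show every $\tilde s$-$\tilde t$-path has non-negative cost, and identify the admissible $s$-$t$-paths with those whose lift has cost zero, i.e.\ is a cheapest $\tilde s$-$\tilde t$-path. You in fact make explicit two details the paper leaves implicit --- the existence of a zero-cost path via an active pair from complementary slackness, and the telescoping argument showing that a path all of whose arcs lie in $A'$ is itself a cheapest $\tilde s$-$\tilde t$-path.
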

\begin{proof}
Any $s$-$t$-path~$P$ in~$D$ naturally corresponds to an~$\tilde{s}$-$\tilde{t}$-path~$\tilde{P}$
in the extended network~$\tilde{D}$, where~$\tilde{P}$ is obtained from~$P$ by adding
arcs~$\tilde{s}s$ and~$t\tilde{t}$ in the beginning and end, respectively.

We start off by showing that~$P$ is admissible if and only if~$c(\tilde{P})=0$. Notice that the
cost of~$\tilde{P}$ is equal to $c_{\tilde{s}s}+c(P)+c_{t\tilde{t}} = -y^*_s+c(P)+y^*_t$.
Thus,~$c(\tilde{P})=0$ if and only if~$c(P)=y^*_s-y^*_t$. The latter condition is equivalent to
the definition of~$P$ being admissible.

It remains to prove that the~$\tilde{s}$-$\tilde{t}$-paths of cost zero are exactly the
cheapest $\tilde{s}$-$\tilde{t}$-path in~$\tilde{D}$. Together with the result of the previous
paragraph this then implies that an $s$-$t$-path~$P$ is admissible if and only if~$\tilde{P}$ is
a cheapest~$\tilde{s}$-$\tilde{t}$-path, which, by definition of~$A'$, is equivalent to the statement of the lemma.

We thus need to show that an arbitrary~$\tilde{s}$-$\tilde{t}$-path~$\tilde{P}$ in~$\tilde{D}$
has non-negative cost. The first and last arcs of~$\tilde{P}$ are~$\tilde{s}s$, with~$s\in\sources$,
and~$t\tilde{t}$, with~$t\in\sinks$, respectively. Inbetween lies an~$s$-$t$-path~$P$ whose cost
is at least the cost of a cheapest~$s$-$t$-path~$\bar{c}_{st}$. Thus, 
\[
c(\tilde{P})=c_{\tilde{s}s}+c(P)+c_{t\tilde{t}}\geq -y^*_s+\bar{c}_{st}+y^*_t\geq0 
\]
where the
last inequality immediately follows by dual feasibility~\eqref{eq:dual}. This concludes the proof.
%
%
\end{proof}
Summarizing, we can state the following algorithm which reduces the solution of the quickest
minimum cost transshipment problem to computing a quickest transshipment on a suitably chosen
subnetwork.
\pagebreak[2]
\begin{algorithm}\label{algo}
Quickest Minimum Cost Transshipment
\begin{enumerate}
\item for all~$st\in\sources\times\sinks$, compute cost~$\bar{c}_{st}$ of cheapest $s$-$t$-path in~$D$
\item determine optimal dual solution~$y^*$ to transportation problem on~$\bar{D}$
\item\label{step:cheapest-path-network} find arc set~$A'$ via cheapest-paths subnetwork of extended network~$\tilde{D}$
\item compute quickest transshipment on~$D'=(V,A')$
\end{enumerate}	
\end{algorithm}
With respect to the cheapest-paths subnetwork of~$\tilde{D}$ in
Step~\ref{step:cheapest-path-network}, we refer to our short discussion of how to compute a cheapest-paths subnetwork at the end of Section~\ref{sec:quickest-min-cost-flow}.
\begin{theorem}\label{thm:main}
Algorithm~\ref{algo} correctly solves the quickest minimum cost transshipment problem. Its
running time is dominated by the quickest transshipment computation in its final step.
\end{theorem}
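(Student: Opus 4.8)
The plan is to prove that the minimum time horizon over all transshipments over time in the subnetwork~$D'=(V,A')$ coincides with the minimum time horizon~$\theta^*$ over all \emph{minimum cost} transshipments over time in~$D$, and that a quickest transshipment in~$D'$ --- possibly after discarding cyclic flow components --- simultaneously attains~$\theta^*$ and has minimum cost in~$D$; the assertion about the running time will then be a matter of bookkeeping. Throughout I rely on the standard flow-over-time decomposition into path flows over time and cyclic flows over time, together with the observations established earlier.

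First I would show that every transshipment over time in~$D'$ is, up to cyclic parts, a minimum cost transshipment over time in~$D$. Given a transshipment over time~$g$ in~$D'$, decompose it into path flows and cyclic flows supported on arcs of~$A'$, and delete the cyclic ones; since arc costs are conservative this does not increase the cost, and removing flow cannot increase the time horizon, so we obtain a transshipment over time~$g'$ in~$D'$ with time horizon no larger than that of~$g$ and with no cyclic components. Every source-to-sink path in the decomposition of~$g'$ is an~$s$-$t$-path contained in~$D'$, hence admissible by Lemma~\ref{lem:active}; so~$g'$ sends flow from sources to sinks along admissible paths only, and by Observation~\ref{obs:active} it has minimum cost among transshipments over time in~$D$. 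Consequently the time horizon of~$g'$, and therefore that of~$g$, is at least~$\theta^*$; applied to a quickest transshipment~$g$ in~$D'$, this shows that~$g$ has time horizon at least~$\theta^*$ and that its cycle-free version~$g'$ is a minimum cost transshipment over time in~$D$.

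Conversely, I would take a quickest minimum cost transshipment over time~$f^*$ in~$D$ with time horizon~$\theta^*$. By Observation~\ref{obs:active} its flow is routed from sources to sinks along admissible~$s$-$t$-paths only, each of which lies in~$D'$ by Lemma~\ref{lem:active}; deleting from its decomposition any cyclic components --- which, by minimality of the cost, carry zero cost, and whose removal does not increase the time horizon --- produces a transshipment over time supported on~$A'$ whose time horizon is at most~$\theta^*$. Hence the minimum time horizon of a transshipment over time in~$D'$ equals~$\theta^*$ and is attained by a transshipment that has minimum cost in~$D$. Combining the two directions: a quickest transshipment computed in Step~4 of Algorithm~\ref{algo} has time horizon~$\theta^*$ and, after removal of any cyclic flow components, is a minimum cost transshipment over time in~$D$ of minimum time horizon --- exactly a quickest minimum cost transshipment. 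This establishes correctness.

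For the running time, Step~1 amounts to one single-source shortest-path computation from each source~$s\in\sources$ in~$D$ under the conservative costs~$c$; Step~2 solves the static transportation problem on the bipartite network~$\bar D$ and extracts an optimal dual solution~$y^*$, which is possible in strongly polynomial time; and Step~3 consists of two shortest-path computations in the extended network~$\tilde D$ --- well defined because, as noted before Lemma~\ref{lem:active}, $\tilde D$ contains no negative cost cycle --- followed by a scan over all arcs, exactly as explained at the end of Section~\ref{sec:quickest-min-cost-flow}. Each of these is dominated by the quickest transshipment computation in the final step. I expect the one genuinely delicate point to be this use of flow-over-time decomposition in both directions: one must argue that a transshipment over time restricted to~$A'$ splits into admissible path flows plus cyclic flows so that Observation~\ref{obs:active} applies, and that excising cyclic flow components never increases the time horizon (and, by conservativeness, never increases the cost). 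Everything else --- existence of the relevant optima, the complementary-slackness reasoning behind Observations~\ref{obs:transportation} and~\ref{obs:active}, and the running-time accounting --- is routine given the preceding results.
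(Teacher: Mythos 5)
Your proof is correct and follows essentially the same route as the paper: it combines Lemma~\ref{lem:active} with Observation~\ref{obs:active} to conclude that the transshipments over time in~$D'$ are exactly (up to cyclic parts) the minimum cost transshipments over time in~$D$, so that a quickest transshipment in~$D'$ is a quickest minimum cost transshipment in~$D$. Your explicit handling of cyclic flow components via flow decomposition is a point of rigor that the paper's two-sentence proof glosses over, and you resolve it correctly.
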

\begin{proof}
By Lemma~\ref{lem:active} and Observation~\ref{obs:active}, any transshipment over time in the
subnetwork~$D'=(V,A')$ yields a minimum cost transshipment over time in~$D$, and any minimum cost
transshipment over time in~$D$ must only use arcs in~$A'$ and thus lives in~$D'$. In particular, a
quickest transshipment in~$D'$ yields a quickest minimum cost transshipment in~$D$.
\end{proof}
The currently fastest known quickest transshipment algorithm is due to \citet{SchloterST22} and
achieves a running time of~$\tilde{O}(m^2k^5+m^3k^3+m^3n)$,
where~$k\coloneqq|\sources\cup\sinks|$ and the~$\tilde{O}$-notation omits all poly-logarithmic
terms.
\section{Concluding Remark}
Flows over time with costs can be seen as a bicriteria problem, with time horizon and total cost as
two conflicting objectives.
%
%
The efficiently solvable quickest minimum cost flow problem asks for a solution at one end of the
corresponding tradeoff curve, while the NP-hard minimum cost quickest flow problem targets the other
end. As \citet{KlinzW95,KlinzW04} point out, computing minimum cost flows over time with bounded
time horizon is NP-hard.
This essentially means that solutions corresponding to intermediate points on the tradeoff curve
are NP-hard to find. In view of this, the fact that the minimum cost quickest
flow/transshipment problem can be solved efficiently thus reveals a kind of singularity in the
complexity landscape of minimum cost flows over time.

\subsection*{Acknowledgement} This note is dedicated to the memory of my friend and colleague Gerhard Woeginger, in thankful admiration for his outstanding scientific contributions, his fine taste of problems, his unfailingly inspiring lectures, and, last but not least, for his great sense of humor. 


\end{document}